
%
\documentclass[%
 reprint,
 amsmath,amssymb,
 aps,
prl
]{revtex4-2}

\usepackage{graphicx}
\usepackage{dcolumn}
\usepackage{bm}
\usepackage{xcolor}
\usepackage{physics}
\usepackage{float}



\usepackage{amsthm}
\newtheorem{theorem}{Theorem}

\newtheorem{corollary}[theorem]{Corollary}

\newtheorem{lemma}[theorem]{Lemma}
\newtheorem*{lemma*}{Lemma}

\theoremstyle{definition}

\DeclareMathOperator{\poly}{poly}

\begin{document}
\setlength{\abovedisplayskip}{3pt}
\setlength{\belowdisplayskip}{3pt}

\preprint{APS/123-QED}

\newcommand{\aqa}{\textsuperscript{1}$\langle aQa ^L\rangle $ Applied Quantum Algorithms, Universiteit Leiden}
\newcommand{\liacs}{\textsuperscript{2}LIACS, Universiteit Leiden, Niels Bohrweg 1, 2333 CA Leiden, Netherlands}
\newcommand{\lorentz}{\textsuperscript{3}Instituut-Lorentz, Universiteit Leiden, Niels Bohrweg 2, 2333 CA Leiden, Netherlands}

\title{All this for one qubit? Bounds on local circuit cutting schemes
}

\author{Simon C. Marshall\textsuperscript{1,2}}
\email{s.c.marshall@liacs.leidenuniv.nl}
\author{Jordi Tura\textsuperscript{1,3}}
\author{Vedran Dunjko\textsuperscript{1,2}}
\affiliation{\aqa}
\affiliation{\liacs}
\affiliation{\lorentz}
\newcommand{\sect}[1]{\emph{#1.---}}


\begin{abstract}

Small numbers of qubits are one of the primary constraints on the near-term deployment of advantageous quantum computing. 
To mitigate this constraint, techniques have been developed to break up a large quantum computation into smaller computations. While this work is sometimes called circuit knitting or divide and quantum we generically refer to it as circuit cutting (CC). Much of the existing work has focused on the development of more efficient circuit cutting schemes, leaving open questions on the limits of what theoretically optimal schemes can achieve.
We develop bounds by breaking up possible approaches into two distinct regimes: the first, where the input state and measurement are fixed and known, and the second, which requires a given cutting to work for a complete basis of input states and measurements. For the first case, it is easy to see that bounds addressing the efficiency of any approaches to circuit cutting amount to resolving \textbf{\textsc{BPP$\stackrel{?}{=}$BQP}}. 
We therefore restrict ourselves to a simpler question, asking what \textit{locally-acting} circuit cutting schemes can achieve, a technical restriction which still includes all existing circuit cutting schemes. In our first case we show that the existence of a locally-acting circuit cutting scheme which could efficiently partition even a single qubit from the rest of a circuit would imply \textbf{\textsc{BPP$=$BQP}}. In our second case, we obtain more general results, showing inefficiency unconditionally. We also show that any (local or otherwise) circuit cutting scheme cannot function by only applying unital channels.

\end{abstract}

\maketitle

\newpage

\sect{Introduction}
The near-term deployment of advantageous quantum algorithms is currently constrained by available hardware.
Among other limitations, modern machines simply do not have enough qubits for the most interesting algorithms. 
In an attempt to augment modern machines, several cutting schemes \cite{bravyi2016trading, Peng_2020, mitarai2021constructing, Eddins_2022, piveteau2022circuit, MarshallQ, lowe2022fast, Optimal2023, harada2023optimal, pednault2023alternative} have been proposed that partition a given quantum circuit into smaller blocks. 
Each block can be run independently and then combined to simulate the output of the original circuit. 
We refer to these techniques collectively as ``Circuit Cutting" (CC).

While the apparent value of these schemes is clear, their practical value is limited by their runtimes
\color{black}
as the number of circuit evaluations required grows exponentially with
each two-qubit gate between partitioned blocks.
\color{black}
For many algorithms, each qubit requires a polynomial number of two-qubit gates with the rest of the circuit, preventing useful application of CC to these cases. Applications are instead relegated to a secondary role, such as augmenting connectivity by adding virtual connections \cite{mitarai2021constructing}. Further use of these techniques depends on how the number of terms required can be reduced.

\begin{figure}
    \centering
    \includegraphics[width=8.6cm]{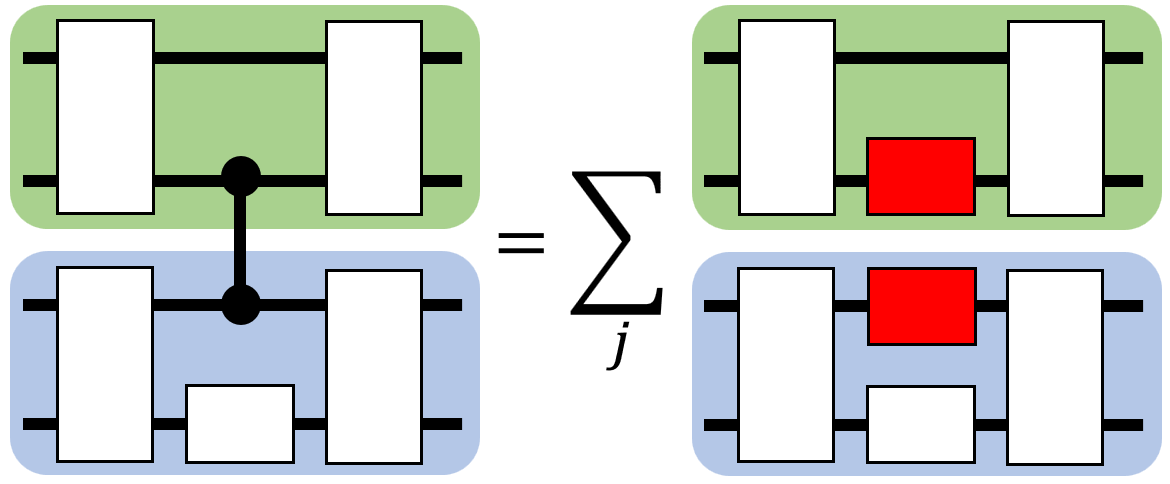}
    \caption{Circuit cutting schemes can be used to simulate computations with more qubits than a user has access to. 
    Here we show a circuit cutting scheme being deployed on a four-qubit circuit, by representing a two-qubit gate as a sum of single-qubit gates we can represent the four-qubit circuit as a sum of tensor products of two-qubit circuits, those two-qubit circuits can then be evaluated on a smaller machine.
    This manuscript focuses only on ``cut-local" schemes, i.e. ones that only modify the circuit at the sites of partition-crossing two-qubit gates, replacing the gates with a sum of single-qubit unitaries.}
    \label{fig:basiccircuit cutting}
\end{figure}

Clearly, some partitions will require a super-polynomial number of circuit evaluations in the worst case if \textbf{\textsc{BPP$\neq$BQP}} (simple example in footnote \footnote{Cutting the circuit in half, then repeatedly cutting the subcircuits generated by this cut in half would require only $\lceil{(log(n))}\rceil$ rounds to reduce an $n$-qubit circuit to a combination of 1-qubit circuits. Each of these 1-qubit circuits can be simulated classically if each round produces at most $A$ subcircuits only $A^{log(n)}$ circuit evaluations are needed (and a similar amount of postprocessing). If the cutting procedure produced less than some pseudo-polynomial number of terms each time, $A$, then a classic simulator exists using pseudopolynomial time}).
This super-polynomial requirement would be more achievable if it scaled in a different parameter, such as the number of qubits in the blocks of the partition.
In some ways, the size of the smallest block is a more natural scaling parameter, for instance, the limited Hilbert space dimension limits the maximum amount of entanglement between blocks (a key ingredient in quantum advantage).

The size of the Hilbert space is a basis for the proofs of circuit cutting scaling requirements in \cite{piveteau2022circuit, lowe2022fast},
where it is shown that an exponential number of terms are needed to simulate large and highly entangled states.
It remains open if a scheme scaling in the size of the smallest block might be possible, as the entanglement, in this case, is highly limited.

\color{black}

This manuscript investigates if there could exist such a circuit cutting scheme, one whose computational overhead would scale polynomially with the number of inter-partition gates, at the cost of an exponential scaling in the size of the smallest partitioned block. We show that when the circuit cutting scheme is limited to local modifications (i.e. it can only modify partition-crossing two-qubit gates, removing this assumption would make any formal statements dramatically more difficult to prove \footnote{As we discuss later, local schemes effectively mean we do not allow significant semantic-preserving circuit rewritings; allowing circuit rewritings is more powerful, but also leads to the (NP-hard) problems of finding minimal or otherwise simplest circuits, making proofs exceptionally difficult. Hence we focus on the simpler case here.}) any circuit cutting scheme that can efficiently remove a single qubit (create a (1, $(n-1)$) partition) would imply \textbf{\textsc{BPP$=$BQP}}.

\sect{Background}
Circuit cutting (CC) schemes \cite{bravyi2016trading, Peng_2020, mitarai2021constructing, Eddins_2022, piveteau2022circuit, MarshallQ, lowe2022fast, Optimal2023,harada2023optimal, pednault2023alternative} (sometimes referred to as circuit partitioning or circuit knitting) are a class of methods designed to reduce the demands on a quantum computer when trying to implement a large quantum circuit. Existing schemes either make multiple calls to the device \cite{bravyi2016trading, Peng_2020, mitarai2021constructing, Eddins_2022} or link multiple devices with classical communication \cite{piveteau2022circuit, lowe2022fast} to simulate the larger circuit. 

Each of these circuit cutting schemes works slightly differently, for simplicity we focus on a generalisation of the formalism presented in \cite{bravyi2016trading}: expressing a given $n$-qubit unitary as a sum of tensor products of two fewer-qubit unitaries,

\begin{equation} 
\label{eq: unitary sum}
U = \sum_i^L \alpha_i U_i'\otimes U_i'',
\end{equation}

\begin{figure}
    \centering
    \includegraphics[width=8.6cm]{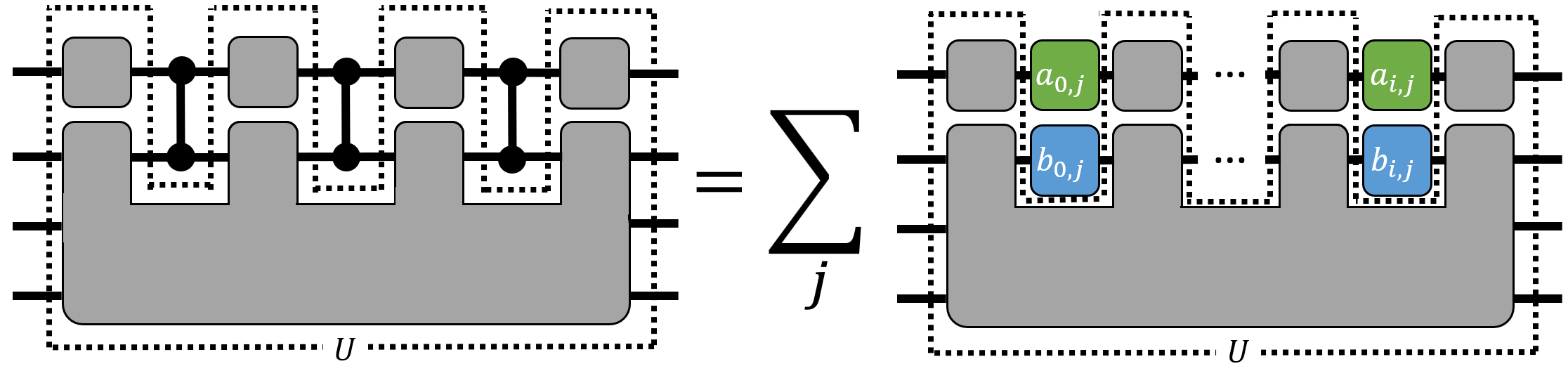}
    \caption{Figure depicts a cut-local circuit cutting scheme being applied to our quantum comb formalism. 
    The comb is designed to specify the whole circuit except for some ``gaps", which are then filled by input gates, in this case, Controlled-Z gates, that can add connectivity between the first qubit and the other $(n-1)$-qubits.
    When a cut-local circuit cutting scheme is applied to the circuit, the connecting two-qubit gates in the gaps become single-qubit operators.
    The resulting unitary is now a tensor-product allowing the first qubit and the other $(n-1)$-qubits to be run separately, reducing the required width of quantum computer. To go beyond cut-local schemes, one would allow $U$ to depend on $j$.}
    \label{fig:comb}
\end{figure}

\noindent but extensions of our results to other schemes (that decompose e.g. tensor-networks \cite{Peng_2020} or superoperators \cite{mitarai2021constructing}) are covered in the ``Generalisation to other schemes'' section.

The existing approaches to circuit cutting incur a super-polynomial scaling in the number of connections, $k$, crossing the partition, i.e. $L \geq c^k$ for some $c\geq 2$.
This manuscript asks if there could exist a scheme capable of partitioning an $n$-qubit circuit into an $m$-qubit block and an $(n-m)$-qubit block with super-polynomial scaling only in $m$ while remaining polynomial in $k$ and $n$, i.e. $L \in O(c'^m \times \poly(n,k))$. We find that even considering the simplest case, where $m=1$, is sufficient to show it is not possible.

 There are conflicting intuitions behind the possible existence of a $\poly(n,k)$ scheme for the $m=1$ case.
On one hand, the addition of a qubit doubles the dimension of the relevant Hilbert space, making it unclear how to simulate the larger Hilbert space with access to only the smaller one.
On the other hand, existing limitations of circuit cutting rely on simulating states with large amounts of entanglement between the blocks to show that $L \geq 2^{k}$. However, in the $m=1$ case, this entanglement is heavily limited, breaking the assumptions behind these limitations. 
Indeed if we simplify the goal to expressing equation \ref{eq: unitary sum} as a sum of \textit{arbitrary linear operators}, it clearly only requires 4 terms to satisfy equation \ref{eq: unitary sum} 
\footnote{\color{black}
This is the upper bound on the Schmidt rank when partitioning a 2-dimensional subspace. To see this, express $U$ in the sum of Pauli strings (which form a basis) and then sum the operators in the $(n-1)$-qubit partition:
$\sum_i \alpha_i \bigotimes_j P_{i,j} = 
\sum_i \alpha_i P_0 \otimes \bigotimes_j P_{i,j} =  \newline
\sum_{p\in\{X,Y,I,Z\}} P_p \otimes \sum_i \alpha_{i,p}\bigotimes_j P_{i,j} =
\sum_p P_p \otimes A_p$\color{black}}.
This manuscript resolves this uncertainty by showing that when the circuit cutting scheme is only allowed to make local modifications an exponential number of terms in $k$ is necessary, regardless of how many qubits are removed.
\color{black}

The existing circuit cutting schemes all share a property we call ``cut-locality": when modifying a gate the resulting subcircuit only differs at the site of the removed gate. In \cite{bravyi2016trading} this is a two-qubit gate replaced locally as the sum of one-qubit gates (see Figure \ref{fig:basiccircuit cutting}).

To formally treat cut-local schemes it is useful to use a variation of the quantum comb formalism \cite{chiribella2008quantum}. We are only interested in using the formalism to partition a single qubit from our circuit, thus we will slightly modify the definition of a quantum comb. 
Informally our quantum comb is a unitary with $G$ ``gaps" where gates can be plugged in, the unitary does not have connections between the first qubit and the rest of the qubits but by putting in two-qubit gates to these gaps we can create a connected unitary. An example is shown in Figure \ref{fig:comb}. We define the quantum comb as a map, $U(\cdot)$, taking in two-qubit unitaries, $G_i$, and returning a fixed $n$-qubit unitary with $G_i$ in the gaps as described. 

Our ultimate goal is to bound cut-local schemes, which would transform a quantum comb with entangling two-qubit gate arguments into the sum of quantum combs with tensor product arguments:

\begin{equation}
U(\ldots, G_j,\ldots) = \sum_{i=0}^L \alpha_i U(\ldots, a_{i,j}\otimes b_{i,j},\ldots)
\label{eq:partitioned Q comb}
\end{equation}

\noindent where $a_{i,j}, b_{i,j} \in SU(2)$ (single qubit unitaries), $\alpha_i \in \mathbb{C}$ (some coefficient) and $G_j\in SU(4)$ (two-qubit gates). 
If a given quantum comb with given two-qubit unitary inputs can be represented as the sum of quantum combs with tensor product inputs in at most $L$ terms, as in equation \ref{eq:partitioned Q comb}, we say there exists an \textit{$L$-term partitioned quantum comb} representation. 

We have yet to define whether the quantum comb is equal to a specific computation, with known input and observable (i.e. classically specified and fixed), or to the unitary itself (so the decomposition does not benefit from considering specific input states or measurements, and must correctly apply to all inputs and measurements). The next section will address both of these cases; we show the unitary case is simple via linear algebra, the fixed-input-output case is more challenging, requiring a complexity-theoretic argument.

\sect{Bounds on the optimal scheme}
Our results are structured into two groups:
The first group, Lemma \ref{thm:exists1} and Theorem \ref{thm:BQP}, shows that when the input and measurement are fixed it is technically possible to partition a single qubit in polynomially many terms, but that if such a decomposition could be found in polynomial time then \textbf{\textsc{BPP$=$BQP}}.
Our second group, centred on Theorem \ref{thm: input non existence} shows that even with unlimited runtime a local circuit cutting scheme cannot find a decomposition of a given circuit in polynomially many terms if we force the same decomposition to apply for every input and observable.

Our first result, showing that a decomposition can be found, is only true in a technical sense and not informative to practical use.

\begin{lemma}\label{thm:exists1}
Given a circuit expressed as a quantum comb with fixed gates, $U(G_1, \ldots)$, a known (i.e. there is a known succinct classical description) input, $\ket{\phi}$ and known observable, $M$, the 
associated expectation value can be expressed with a 1-term partitioned quantum comb in the same expectation value: 

\begin{figure*}
    \centering
    \includegraphics[width=17.8cm]{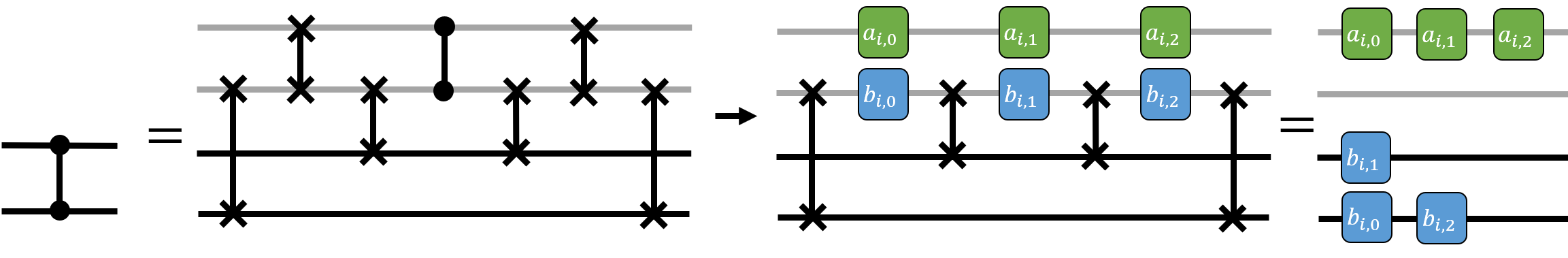}
    \caption{
    Demonstration of a gadget used in proofs of theorems \ref{thm:BQP} and \ref{thm: input non existence}. The gadget adds two ancilla qubits (shown in grey) to move any non-SWAP two-qubit gate to these two ancilla qubits. Applying any local circuit cutting scheme to separately partition the first qubit then leaves a circuit of only SWAP and single-qubit gates. This can then be classically rewritten in polynomial time as a sum of circuits of single-qubit unitaries. To evaluate this circuit classically requires only polynomial time, this is used to reach a contradiction to prove theorem \ref{thm:BQP}. Polynomial sums of single-qubit unitaries can only have polynomial Schmidt rank, this is used to reach a contradiction to prove theorem \ref{thm: input non existence}.}
    \label{fig:gadget}
\end{figure*}

\begin{equation}
    \bra{\phi}U^\dagger(G_1, \ldots)MU(G_1, \ldots)\ket{\phi} = \label{eq: uncut expectation}
\end{equation}
\begin{equation}
    \alpha_0
    \bra{\phi} U^\dagger({a}_{0,0} \otimes {b}_{0,0})M U({a}_{0,0}\otimes {b}_{0,0})\ket{\phi}, \label{eq: cut expectation}
\end{equation}
where $a_{0,0}, b_{0,0} \in SU(2)$ (single qubit unitaries), $\alpha_0 \in \mathbb{C}$ (coefficient).
    
\end{lemma}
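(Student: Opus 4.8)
The plan is to exploit the fact that, once the comb's gates $G_j$, the input $\ket{\phi}$, and the observable $M$ are all classically fixed, the left-hand side of the claimed identity is simply a fixed real number, $v := \bra{\phi} U^\dagger(G_1,\ldots) M U(G_1,\ldots) \ket{\phi}$ (real, since $M$ is Hermitian and $\ket{\phi}$ is normalised). So the lemma collapses to the purely algebraic question of whether we can choose single-qubit unitaries $a_{0,j}, b_{0,j} \in SU(2)$ for the gaps, together with a single coefficient $\alpha_0 \in \mathbb{C}$, so that $\alpha_0$ times the resulting (again fixed, real) expectation value equals $v$.

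First I would dispose of the trivial case $v = 0$: take $\alpha_0 = 0$ and any single-qubit fillings, and both sides vanish. So assume $v \neq 0$. It then suffices to exhibit one tensor-product filling $\{a_{0,j} \otimes b_{0,j}\}_j$ whose reference expectation $r := \bra{\phi} U^\dagger(\ldots,a_{0,j}\otimes b_{0,j},\ldots) M U(\ldots,a_{0,j}\otimes b_{0,j},\ldots) \ket{\phi}$ is nonzero, and then set $\alpha_0 := v/r$. A convenient way to search for such a filling: with tensor-product arguments the comb collapses to a product unitary across the one-qubit cut (this is exactly the point of the comb construction), so $r$ is just the expectation of $M$ in a locally-rotated image of $\ket{\phi}$; as the gap unitaries range over the connected set $SU(2)\times\cdots\times SU(2)$ this value varies continuously and hence sweeps out a real interval, and the natural first attempt is the all-identity filling, perturbed infinitesimally if it happens to give zero.

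The real obstacle -- and essentially the only content of this ``only technical'' statement -- is ruling out the degenerate possibility that the interval swept out is $\{0\}$, i.e. that $M$ has zero expectation on \emph{every} state reachable by a tensor-product filling even though $v \neq 0$. I would attack this using the comb's structure: expanding each $G_j$ in an operator basis of $4\times4$ matrices (as in the footnote's Pauli argument) writes $U(G_1,\ldots)$ as a finite linear combination of tensor-product-comb evaluations $U(\ldots,\sigma\otimes\sigma',\ldots)$ whose gap fillings are Paulis -- hence single-qubit unitaries up to an irrelevant global phase -- so that $v$ becomes a finite sum of terms $\bra{\phi}(\cdot)^\dagger M(\cdot)\ket{\phi}$ assembled from these evaluations; the aim is then to show this sum cannot be nonzero if the corresponding reference expectations all vanish, giving the contradiction. (Equivalently, one may restrict attention to the natural non-degenerate choices of fixed input and observable across the cut, which is all that is needed for the later use of the lemma.) Once any nonzero reference value $r$ is in hand, the rescaling $\alpha_0 = v/r$ is immediate and the lemma follows -- with the understanding that $r$, and hence $\alpha_0$, is exactly as hard to compute as $v$ itself, which is what Theorem~\ref{thm:BQP} then makes precise.
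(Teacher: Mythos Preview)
Your strategy is exactly the paper's: dispose of $v=0$ by taking $\alpha_0=0$ with arbitrary fillings, and for $v\neq 0$ exhibit any tensor-product filling whose reference value $r$ is nonzero, then set $\alpha_0=v/r$. The paper's proof is literally those two sentences.

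Where you diverge is in arguing that some nonzero $r$ must exist when $v\neq 0$. The paper does not argue this at all; it simply cites \cite{bravyi2016trading}, which already supplies a finite (possibly exponentially long) decomposition of the expectation value itself as a sum of tensor-product-comb expectation values, so a nonzero total forces a nonzero summand. Your route instead decomposes the \emph{unitary} $U(G_1,\ldots)$ via the Pauli basis and substitutes into $\bra{\phi}U^\dagger M U\ket{\phi}$. That expression is quadratic in $U$, so you pick up cross terms $\bra{\phi}U_i^\dagger M U_j\ket{\phi}$ with $i\neq j$, which are \emph{not} of the reference form $\bra{\phi}U_i^\dagger M U_i\ket{\phi}$; vanishing of all diagonal references does not by itself force the full sum to vanish. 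You flag this yourself (``the aim is then to show\ldots'') but do not close it, and the parenthetical about ``natural non-degenerate choices'' is a retreat from the general statement rather than a proof of it. The fix is to work at the expectation-value (equivalently, channel) level from the outset, where the known circuit-cutting decomposition is linear and the contrapositive is immediate --- which is exactly what the paper's one-line appeal to \cite{bravyi2016trading} accomplishes.
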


\begin{proof}
The outcome of equation \ref{eq: uncut expectation} is equal to some real number, $\gamma$. If $\gamma = 0$, set $\alpha_0$ to 0 and ${a}_{0,0}$ and ${b}_{0,0}$ to any single qubit gates. The equality holds proving this case.

If $\gamma \neq 0$, by \cite{bravyi2016trading} it is known that there must exist some input ${a}_{0,0}$ and ${b}_{0,0}$ that produces a non-zero output of equation \ref{eq: cut expectation}. This non-zero output can then be rescaled with some $\alpha_0 \in \mathbb{C}$ to achieve the equality.
\end{proof}

It is easy to see that if a scheme existed to produce this 1-term partitioned quantum comb, then we could iteratively apply it to the whole circuit and obtain a classical algorithm with polynomial runtime to compute any quantum circuit, implying \textbf{\textsc{BQP}} $=$ \textbf{\textsc{BPP}}. 
The following theorem then demonstrates that \textit{finding} this partition must in some way contain the hardness of \textbf{\textsc{BQP}}, indeed it shows that the existence of any polynomial-time circuit cutting algorithm capable of finding this polynomial termed partitioned quantum comb would imply \textbf{\textsc{BQP}} $=$ \textbf{\textsc{BPP}}.

\begin{theorem}
\label{thm:BQP}
    If there exists a polynomial time classical algorithm that takes an arbitrary input circuit expressed as a quantum comb with fixed gates, $U(G_1, \ldots)$, a known input $\ket{\phi}$ and a known observable $M$, 
    and returns the arguments of an 
    $L$-term partitioned quantum comb, $a_{i,j}, b_{i,j} \in SU(2), \alpha_i \in \mathbb{C}$ for $L \in poly(k, n)$, such that: 
    
    $$
\bra{\phi}U^\dagger(G_1, \ldots)MU(G_1, \ldots)\ket{\phi} =
    $$
    $$
\bra{\phi}\sum_i^L \Bar{\alpha_i} U^\dagger({a}_{i,0} \otimes{b}_{i,0}, \ldots)M\sum_i^L 
\alpha_i U({a}_{i,0} \otimes{b}_{i,0}, \ldots)\ket{\phi}
    $$
    then \textbf{\textsc{BQP}} $=$ \textbf{\textsc{BPP}}.

\end{theorem}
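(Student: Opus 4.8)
The plan is to show that such an algorithm would give a classical polynomial-time procedure for estimating the acceptance probability of an arbitrary polynomial-size quantum circuit, hence $\mathbf{BQP}\subseteq\mathbf{BPP}$; combined with the trivial $\mathbf{BPP}\subseteq\mathbf{BQP}$ this yields $\mathbf{BQP}=\mathbf{BPP}$. So fix a $\mathbf{BQP}$ language, and let $\tilde C$ be a uniformly generated circuit on $m$ qubits with $t=\poly(m)$ two-qubit gates whose acceptance probability is $\bra{0^m}\tilde C^\dagger M\tilde C\ket{0^m}$ for a single-qubit observable $M$ on some output wire $\mu$.

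First I would apply the gadget of Fig.~\ref{fig:gadget}: by adding two ancilla qubits and inserting SWAP gates, every non-SWAP two-qubit gate of $\tilde C$ is rerouted so as to act only on the ancilla pair, yielding a circuit $C$ on $n=m+2$ qubits with $\bra{0^m}\tilde C^\dagger M\tilde C\ket{0^m}=\bra{0^n}C^\dagger M C\ket{0^n}$ in which every two-qubit gate is \emph{either} a SWAP \emph{or} acts on the two ancillas. This rewriting is classical, polynomial-time, and adds only $O(t)$ gates. Now take the $(1,n-1)$ partition that isolates one of the ancillas (relabelled ``qubit $1$''): the gates of $C$ crossing this cut are exactly the routing SWAPs touching that ancilla together with the ancilla--ancilla gates, that is, $k=O(t)$ two-qubit gates $G_1,\dots,G_k$. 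Deleting them leaves nothing connecting qubit $1$ to the other $n-1$ qubits, so $C=U(G_1,\dots,G_k)$ exhibits $C$ as a quantum comb.

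Next I would feed $U(G_1,\dots,G_k)$, the input $\ket{0^n}$, and $M$ to the hypothesised algorithm, obtaining in polynomial time an $L$-term partitioned comb with $L=\poly(k,n)=\poly(m)$ and
$$
\bra{0^n}C^\dagger M C\ket{0^n}=\sum_{i,i'=1}^{L}\bar\alpha_{i'}\alpha_i\,\bra{0^n}U^\dagger(a_{i',\cdot}\!\otimes b_{i',\cdot})\,M\,U(a_{i,\cdot}\!\otimes b_{i,\cdot})\ket{0^n}.
$$
The purpose of the gadget is that each $U(a_{i,\cdot}\otimes b_{i,\cdot})$ is now a product $V_i\otimes W_i$ across the cut, with $V_i\in SU(2)$ on qubit $1$ and $W_i$ an $(n-1)$-qubit circuit built \emph{only} from single-qubit gates and SWAPs (the entangling ancilla--ancilla gates and the routing SWAPs were precisely the gates destroyed by the cut, and every SWAP left inside $U$ never crossed it). Hence $W_i\ket{0^{n-1}}$ is a product state computable in $\poly(m)$ time, and since $M$ is single-qubit each summand factorises as $\bra{0}V_{i'}^\dagger V_i\ket{0}\cdot\bra{0^{n-1}}W_{i'}^\dagger M W_i\ket{0^{n-1}}$, with both factors computable in $\poly(m)$ time. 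Summing the $L^2=\poly(m)$ terms then recovers $\bra{0^m}\tilde C^\dagger M\tilde C\ket{0^m}$ classically in polynomial time, to whatever precision the algorithm reports its parameters --- which we may demand to be finer than the $\mathbf{BQP}$ promise gap --- and the contradiction follows.

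The step I expect to be the real work is the gadget: one must verify that a constant number of ancillas and SWAPs suffices to push \emph{all} of $\tilde C$'s entangling power onto gates sitting across the isolated ancilla's cut, while (i) preserving the expectation value exactly, (ii) increasing the number of partition-crossing gates only polynomially so that $L$ stays polynomial, and (iii) leaving behind, after those gates are deleted, a circuit consisting solely of single-qubit gates and SWAPs and therefore trivially simulable on a product input. The remaining ingredients --- writing $C$ as a comb, invoking the algorithm, and the $O(L^2)$-term post-processing --- are routine, and the only other fine point is tracking numerical precision so the computed estimate provably beats the promise gap.
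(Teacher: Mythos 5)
Your proposal is correct and follows essentially the same route as the paper: apply the SWAP-rerouting gadget of Figure~\ref{fig:gadget} so that all entangling gates cross the $(1,n-1)$ cut at an ancilla, invoke the hypothesised algorithm to replace those crossing gates with tensor products, and observe that what remains is a polynomial sum of circuits built only from SWAPs and single-qubit gates, hence classically simulable. You merely spell out details the paper leaves implicit (the $L^2$ cross-term evaluation, the factorisation of each summand, and the precision bookkeeping against the promise gap), all of which are consistent with the paper's argument.
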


\begin{proof}
Given an algorithm that can partition a single qubit as described (call this algorithm $\mathcal{A}$) we provide an efficient classical simulator.

Given a $n$-qubit input circuit $V$ written in some standard gate set (e.g. H, CNOT, T), replace every existing two-qubit gate anywhere in $V$ with the gadget shown in Figure \ref{fig:gadget}, creating a circuit of only swap gates and single-qubit gates everywhere except for arbitrary 2 qubit gates between the first 2 qubits.

Applying $\mathcal{A}$ to separate the top qubit produces a circuit of only SWAP and single qubit gates, which is classically simulatable in $\poly(n)$ time \cite{nielsen2002quantum}. 
\end{proof}

This represents our main results: put simply, local schemes cannot partition even a single qubit without paying an exponential cost somewhere. Extending this result to the impossibility of separating $l$ qubits is just a matter of ``padding" the gadget with $l-1$ qubits which do not interact with the rest of the circuit. 
Note that this result transfers to the task of approximating (rather than exactly recreating) the output with an $L$-term partitioned quantum comb as \textbf{\textsc{BQP}} is robust to approximations of outputs.
We will also describe how this result can be extended to other local circuit cutting schemes in the next section.

While the condition \textbf{\textsc{BQP}} $\neq$ \textbf{\textsc{BPP}} is a reasonable requirement, it is not clear if it is necessary. We show that by forcing one decomposition to apply for all inputs and measurements (which is equivalent to demanding the unitaries are the same) we can show unconditionally that it is impossible to partition one qubit from an $n$ qubit circuit with only polynomially many terms. 

\begin{theorem}\label{thm: input non existence}
There exists quantum circuits expressible as a quantum comb with input gates, $U(G_1, \ldots)$, such that 
for all $L\in O(poly(n))$ and inputs $\alpha_i \in \mathbb{C}, {a}_{i,j}, {b}_{i,j} \in SU(2)$,

$$
U(G_1, \ldots) \neq
\sum_i^L \alpha_i U({a}_{i,0} \otimes {b}_{i,0}, \ldots).
$$

\end{theorem}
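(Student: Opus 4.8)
The plan is to exhibit an explicit comb and argue that any decomposition of the form $\sum_i^L \alpha_i U(a_{i,0}\otimes b_{i,0},\ldots)$ must have Schmidt rank (across the $1\,|\,(n-1)$ cut) at most polynomial in $L$, while the chosen comb, for a suitable choice of the fixed gates $G_1,\ldots$, realizes a unitary whose Schmidt rank across that cut is $4$ (the maximum for a single qubit) — so far so good — but more importantly whose \emph{operator-Schmidt structure cannot be matched by any polynomial sum of product-argument combs}. The right invariant to track is not raw Schmidt rank of $U$ itself (which is only $4$) but rather the dimension of the space of operators on the $(n-1)$-qubit side that $U$ can ``steer'' to as the plugged-in gates vary — equivalently, we should reduce to the gadget of Figure~\ref{fig:gadget} so that after separating the top qubit the circuit consists only of SWAP and single-qubit gates, hence each term $U(a_{i,0}\otimes b_{i,0},\ldots)$ is itself a tensor product (or a sum of few tensor products) once rewritten, giving the whole sum Schmidt rank $O(L)$ across the cut.

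Concretely, the steps I would carry out are: (1) Fix the comb so that the $G_j$ gates sit between qubit $1$ and qubit $2$, and pad/arrange the rest of the circuit (using the SWAP-gadget construction already introduced for Theorem~\ref{thm:BQP}) so that, the moment the argument in the $j$-th gap is a product $a\otimes b$, the resulting $n$-qubit unitary factorizes across the $1\,|\,(n-1)$ partition — or at least has Schmidt rank bounded by a constant $c$. (2) Conclude that the right-hand side $\sum_i^L \alpha_i U(a_{i,0}\otimes b_{i,0},\ldots)$, being a sum of $L$ operators each of Schmidt rank $\le c$, has Schmidt rank at most $cL$ across the cut. (3) Choose the fixed gates $G_1,\ldots$ (the entangling ones) so that $U(G_1,\ldots)$ has Schmidt rank strictly larger than $cL$ for the target $L\in O(\poly(n))$ — for this I would let the circuit route the top qubit through many SWAPs so that it effectively ``touches'' many of the $(n-1)$ qubits, building up a unitary whose Choi state across the cut has super-polynomial Schmidt rank; a clean choice is a cascade that, say, applies a CZ between qubit $1$ and qubit $i$ for each $i$, after appropriate SWAP routing, whose operator-Schmidt rank across the $1\,|\,\text{rest}$ cut is $2$ but whose iterated version or whose combination with the comb's fixed structure pushes the rank up. (4) Since $cL < \operatorname{SchmidtRank}(U(G_1,\ldots))$, the equality in the theorem statement is impossible, for every choice of $\alpha_i, a_{i,j}, b_{i,j}$.

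The main obstacle is step (3): a single qubit only supports operator-Schmidt rank $4$ across the $1\,|\,(n-1)$ cut, so one cannot simply make $U(G_1,\ldots)$ have high Schmidt rank in the naive sense — the footnote in the excerpt already notes $4$ terms suffice for \emph{arbitrary linear operators}. The resolution, and the subtle point of the whole argument, is that cut-locality forces the $U$ on the left-hand side and the $U$ inside each summand to be \emph{the same comb}: the decomposition must hold as an operator identity simultaneously for the plugged-in gates in \emph{all} $G$ gaps, not just one. So the correct invariant is the Schmidt rank not of a single $2$-dimensional marginal but of the full map $G_1,\ldots,G_G \mapsto U(G_1,\ldots)$ viewed as living in a tensor product of $G$ copies of $SU(4)$; equivalently, one counts linearly independent ``slices'' $U(a\otimes b,\ldots)$ as the product arguments range over a basis of product operators, and shows this spans a space of dimension $4^G$ on one side while the left-hand comb with genuinely entangling $G_j$'s lies outside the span of any $\poly(n)$ of them. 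I would therefore spend most of the effort setting up the comb so that the $G$ gaps are ``independent'' in this linear-algebraic sense (e.g., acting on disjoint SWAP-routed wires), making the $4^G$ count rigorous, and then pick $G = \omega(\log n)$ so that $4^G$ beats any polynomial $L$ — the SWAP-gadget padding from Theorem~\ref{thm:BQP} is exactly the tool that makes each gap contribute independently while keeping product-arguments tensor-factorizable.
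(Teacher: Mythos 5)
Your setup is on the right track --- gadgetize with SWAPs so that every product-argument term becomes a circuit of only SWAPs and single-qubit gates, then argue a sum of $L$ such terms has small rank --- and you correctly identify the obstacle that the operator Schmidt rank of $U(G_1,\ldots)$ across the $1\,|\,(n-1)$ cut is at most $4$, so no contradiction can be extracted from that cut. But the replacement you propose for step (3) does not work. By multilinearity of the comb in each gap, every two-qubit gate $G_j$ expands in the Pauli product basis as a sum of at most $4$ terms $a\otimes b$ with $a,b$ proportional to single-qubit unitaries, so $U(G_1,\ldots)$ is \emph{always} a linear combination of at most $4^G$ product-argument combs and in particular always lies in their span; the claim that an entangling comb ``lies outside the span of any $\poly(n)$ of them'' conflates span membership with the number of terms required. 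What you actually need is a lower bound on the length of any such combination --- a rank lower bound with respect to this generating set --- and your proposal supplies no mechanism (no flattening, no witness) for proving one. Dimension counting of the slice space does not imply that any particular element needs many terms.

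The missing idea is to pick the witness on a different cut: evaluate both sides on the product input $\ket{0}^{\otimes n}$ and measure entanglement across a bipartition of the $n$ data qubits themselves, not across the comb's $1\,|\,(n-1)$ cut. After the gadget of Figure~\ref{fig:gadget}, each term $U(a_{i,0}\otimes b_{i,0},\ldots)$ consists only of SWAPs and single-qubit gates and therefore maps $\ket{0}^{\otimes n}$ to a product state, so the $L$-term sum yields a state of Schmidt rank at most $L$ across \emph{every} qubit bipartition. Choosing the fixed gates so that $U(G_1,\ldots)\ket{0}^{\otimes n}$ is $n/2$ Bell pairs straddling a single bipartition forces Schmidt rank $2^{n/2}$ on the left-hand side, contradicting $L\in O(\poly(n))$. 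This is exactly the route the paper takes; without a witness of this kind --- a flattening on which each generator has rank $O(1)$ while the target has super-polynomial rank --- your step (3) cannot be completed, and the argument as written does not close.
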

\begin{proof}
    Define $C$ as the circuit that generates $n/2$ Bell pairs from the $\ket{0}^{\otimes n}$ state.
    We can apply the rewriting gadget in Figure \ref{fig:gadget} to $C$, call this new circuit $C'$. 
    
    Assume towards contradiction that there exists a polynomial-$L$ term quantum comb implementing the same unitary as $C'$, by separating the first qubit we have created a sum of $L$ tensor product circuits. As shown in Figure \ref{fig:gadget}, each of these circuits acts locally on every qubit, thus each individual circuit has an operator Schmidt rank of 1 across any partition. 
    Summing over $L$ tensor product terms produces an operator of Schmidt rank at most $L$.
    Applying this circuit to the Schmidt rank 1 input state, $\ket{0}^{\otimes n}$, we produce an output state of Schmidt rank at most $L$, but $n/2$ Bell pairs require a Schmidt rank of atleast $2^{n/2}$  \cite{piveteau2022circuit}, which is a contradiction.
\end{proof}

As with the previous theorem,
the proof of Theorem \ref{thm: input non existence} also extends to the case of \textit{approximating} a unitary easily; 
the fidelity between the closest $poly(n)$-Schmidt rank state and the $n/2$-Bell-pairs state decays exponentially in $n$. 
This implies that the operator distance (and diamond-norm distance) between $U$ and any polynomial sum approximating $U$ also becomes maximal in $n$. 

\sect{Generalisations to other schemes}\label{sect: extensions}
Our results have bounded how any locally acting unitary-based circuit cutting schemes can perform, but we have said relatively little about how a general scheme (one which can express circuits as the sum of other circuits of any form) may perform. 
It is therefore important to determine how broadly our results apply.
In this section, we generalise our framework to encompass other locally acting circuit cutting schemes and discuss how apparently promising routes to generalise to non-local circuit cutting schemes do not work out.

To
generalise our technique to other locally acting circuit cutting schemes note that the choice to decompose a unitary into other unitaries, while useful for illustration, was not maximally general. Instead, we can consider decomposing the \textit{channel} associated to that unitary, $\mathcal{U}$, into other channels:

\begin{equation} 
\label{eq: channel sum}
 \mathcal{U} = \sum_i^L \alpha_i \mathcal{C}_i
\end{equation}

\noindent $\mathcal{C}_i$ now respect an analogous cut locality condition.
If a scheme obeys this locality condition (as \cite{mitarai2021constructing} does) then the gadget can be applied to convert a connected circuit into a tensor product, allowing for classical simulation and extending Theorem \ref{thm:BQP} to this case.
Even classical augmentation of the channel (e.g. with classical communication \cite{piveteau2022circuit, lowe2022fast}) would not break this simulability argument, further extending our results to this case.

It is less clear how the circuit cutting schemes that cut qubits time-wise (i.e. decompose identity channels \cite{Peng_2020, lowe2022fast}) fit into this framework. 
The time-like circuit cutting schemes can be used to create partitioned blocks by cutting qubits that appear in two otherwise disconnected blocks, the choice of which qubits to cut is not immediately clear in our problem (which is to reduce the hardware requirements by just one qubit), instead we must try and find the analogous problem.
If we only allow modifications outside the quantum comb, but still require blocks of at most $(n-1)$-qubits then the only option is to decompose local channels on the 2\textsuperscript{nd} qubit. In this case, our results extend.
\color{black}

Extending these results even further, to non-local (i.e. unrestricted) circuit cutting schemes, generally becomes much more challenging. 
The question now runs into issues of deciding the minimum circuit size necessary to implement a given function, related to the famously opaque minimum circuit size problem and its quantum analogue \cite{chia2021quantum}. \color{black}
Fortunately existing schemes operate using only local cuts, making this question less relevant.
\color{black}

Finally, we wish to address an ostensible link between bounds on non-local circuit cutting and
the one clean qubit model \cite{knill1998power}.
The one clean qubit model is a restricted computational model consisting of an arbitrary circuit taking input of one clean qubit in some fiducial state and $(n-1)$ maximally mixed qubits. 
If one applies the types of circuit cutting methods discussed in this paper to achieve an $(1,n-1)$ partition, one may come to the conclusion that the $(n-1)$-qubit computations will be acting on the maximally mixed states, which is classically simulatable. In this case, if the circuit cutting results in just polynomially many terms, the entire circuit cutting computation would be weakly simulatable, which would imply \textbf{\textsc{PH}}$=$\textbf{\textsc{AM}} \cite{fujii2018impossibility}. 
This would constitute a rather elegant general no-go result for circuit cutting methods achieving a sub-exponential number of terms.
However, the argument fails as circuit cutting does not necessarily apply a sum of just unitary channels to the maximally mixed input (e.g. in \cite{bravyi2016trading} different unitaries might be multiplied to the left and right side of the state, which is not a unitary channel and doesn't preserve the classical simulability of the maximally mixed state) 
or necessarily compute a $(n-1)$-qubit circuit on a subsystem of just the original (maximally mixed) input.
Indeed this argument can be modified to show a slightly more general result: that all circuit cutting schemes must apply non-unital channels (which contain unitary channels), regardless of the number of terms generated (exponential or otherwise).

\begin{corollary}
No circuit cutting scheme can decompose any given unitary, $U$, on a given partition into a finite sum of only  unital channels.
\end{corollary}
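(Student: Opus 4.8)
The plan is to prove something slightly stronger than the literal statement: for \emph{every} unitary $U$ that is entangling across the given partition $A\mid B$, there is no finite cut-local decomposition $\mathcal{U}=\sum_{i=1}^{L}\alpha_i\,\mathcal{C}_i$ (as in equation~\ref{eq: channel sum}, so each $\mathcal{C}_i=\mathcal{E}_i\otimes\mathcal{F}_i$ acts as a tensor product across the cut) in which every term $\mathcal{C}_i$ is unital. (For a product unitary there is nothing to cut, so this covers all nontrivial cases and in particular the $(1,n-1)$ cuts of this paper.) The one structural input is elementary: if $\mathcal{E}_i\otimes\mathcal{F}_i$ is unital then $\mathcal{E}_i(I_A)\otimes\mathcal{F}_i(I_B)=I_A\otimes I_B$, and a tensor factorisation of the identity forces $\mathcal{E}_i(I_A)=c_i I_A$, $\mathcal{F}_i(I_B)=c_i^{-1}I_B$ for a scalar $c_i\neq0$. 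First I would feed the decomposition the family of operators $I_A\otimes O_B$ with $O_B$ ranging over all of $\mathcal{B}(H_B)$: each term returns $\mathcal{E}_i(I_A)\otimes\mathcal{F}_i(O_B)=c_i\,I_A\otimes\mathcal{F}_i(O_B)\in I_A\otimes\mathcal{B}(H_B)$, and hence $U(I_A\otimes O_B)U^\dagger\in I_A\otimes\mathcal{B}(H_B)$ for \emph{every} $O_B$.

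Next I would convert this into rigidity of $U$. Because conjugation by $U$ is injective and $\dim\!\big(I_A\otimes\mathcal{B}(H_B)\big)=d_B^2$, the inclusion is an equality, so $O_B\mapsto O_B'$, defined by $U(I_A\otimes O_B)U^\dagger=I_A\otimes O_B'$, is a unital $*$-automorphism of the full matrix algebra $\mathcal{B}(H_B)$. Every such automorphism is inner, so $O_B'=VO_BV^\dagger$ for a fixed unitary $V$ on $H_B$; then $W:=(I_A\otimes V)^\dagger U$ commutes with $I_A\otimes\mathcal{B}(H_B)$, hence lies in its commutant $\mathcal{B}(H_A)\otimes I_B$, so $W=W_A\otimes I_B$ and $U=W_A\otimes V$ is a product unitary across the cut — contradicting that $U$ was entangling. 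For readers wanting a concrete self-contained witness one can bypass the algebra entirely: take $U=\mathrm{CNOT}$ with the control qubit as the $A$-side; any all-unital cut decomposition evaluated on $\ket{+}\bra{+}_A\otimes (I_B/d_B)$ must output $\big(\sum_i \alpha_i c_i\,\mathcal{E}_i(\ket{+}\bra{+})\big)\otimes (I_B/d_B)$, a product state with maximally mixed $B$-marginal, whereas in fact $\mathrm{CNOT}\,\big(\ket{+}\bra{+}\otimes \tfrac{I}{2}\big)\,\mathrm{CNOT}=\tfrac14\big(I\otimes I+X\otimes X\big)$, which is entangled (rank two with both marginals maximally mixed, hence not a product).

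The main obstacle I expect is not the linear algebra above, which is short, but pinning down precisely which maps a ``circuit cutting scheme'' may emit and checking the argument survives there. It should: the derivation uses \emph{only} that each term is cut-local (a tensor product across the partition) and unital, so it is unchanged if the $\mathcal{C}_i$ are merely completely positive rather than trace-preserving, if the $\alpha_i$ are complex, and if classical side-information is interleaved between the parts (the induced map on the quantum registers is still a cut-local unital map). A secondary subtlety worth flagging is that the naive ``feed in the maximally mixed state'' heuristic is \emph{not} sufficient on its own — it is defeated by $U=\mathrm{SWAP}$, whose output on a maximally mixed input is again a product state — so one genuinely needs the full family $\{I_A\otimes O_B\}$, equivalently the ``$U$ normalises $I_A\otimes\mathcal{B}(H_B)$'' formulation. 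Finally, since all steps are norm-continuous, the same argument shows that an \emph{approximate} all-unital cut decomposition forces $U$ to be close to a product unitary, so any $U$ bounded away from product unitaries cannot even be approximately cut with unital channels.
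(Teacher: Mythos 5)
Your proof is correct, and its main argument takes a genuinely different and stronger route than the paper's. The paper's Appendix~B proof is a single-witness argument: feed a pure state on one block tensored with the maximally mixed state on the other; unitality pins the second block's output to remain maximally mixed, which already contradicts the true output of, e.g., a SWAP across the cut (note the paper applies the maximally mixed input only to \emph{one} block, so your side remark that SWAP ``defeats'' the maximally-mixed heuristic applies to the fully mixed global input, not to what the paper actually does --- your own CNOT witness is essentially the paper's argument with $\ket{+}$ in place of $\ket{0}$). Your main argument instead proves a characterization: unitality of each tensor-product term forces $U(I_A\otimes\mathcal{B}(H_B))U^\dagger\subseteq I_A\otimes\mathcal{B}(H_B)$, and the standard automorphism/commutant (Skolem--Noether) argument then forces $U$ to be a product unitary across the cut. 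This buys more than the corollary asks for: it rules out all-unital decompositions for \emph{every} entangling unitary on \emph{every} partition (a single witness only handles unitaries for which one can exhibit a suitable input, and showing such an input always exists essentially requires your algebraic step anyway), and it degrades gracefully to the approximate setting. One small correction: the state $\tfrac14(I\otimes I+X\otimes X)=\tfrac12(\ket{++}\bra{++}+\ket{--}\bra{--})$ is separable, not entangled; the property your contradiction actually uses --- that it is not a \emph{product} state, being rank two while any state of the form $\rho_A\otimes I/2$ with maximally mixed marginals would be $\tfrac14 I\otimes I$ --- is correct, so the witness stands, but the word ``entangled'' should be replaced by ``correlated'' or ``non-product.''
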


\begin{proof}
The proof (provided in appendix B) functions by using two SWAP gates to swap a clean qubit into a maximally mixed block.
\end{proof}

\sect{Summary}
In this manuscript we have analysed the limits of locally acting circuit cutting schemes' ability to partition whole qubits. We found that for polynomially many two-qubit gates between the single qubit and the rest of the circuit, no locally acting scheme can achieve polynomial runtimes. 
We discussed how these results can be extended into other locally acting circuit cutting schemes, such as the superoperator or tensor network formalisms and suggested that they may apply to all local schemes.

This manuscript suggests a clear future research direction: to either generalise these results to non-local circuit cutting schemes, or to attempt to utilise some of the intuitions generated  here to produce an efficient scheme for removing a single qubit from a circuit.

\sect{Acknowledgments}
SCM thanks Angus Lowe and Stefano Polla for interesting and useful discussion.
SCM and VD thank Adri\'an P\'erez-Salinas for useful discussions in the early stages of this work. 
The authors thank Alice Barthe, Kshiti Sneh Rai, Yash Patel, Angus Lowe for useful comments on the manuscript of this work.

VD and SCM acknowledge the support by the project NEASQC funded from the European Union’s Horizon 2020 research and innovation programme (grant agreement No 951821). VD and SCM also acknowledge partial funding by an unrestricted gift from Google Quantum AI. VD was supported by the Dutch Research Council (NWO/OCW), as part of the Quantum Software Consortium programme (project number 024.003.037).
This work has received support from the European Union’s Horizon Europe program through the ERC StG FINE-TEA-SQUAD (Grant No. 101040729). The authors also acknowledge support from the Quantum Delta program. This publication is part of the ‘Quantum Inspire – the Dutch Quantum Computer in the Cloud’ project (with project number [NWA.1292.19.194]) of the NWA research program ‘Research on Routes by Consortia (ORC)’, which is funded by the Netherlands Organization for Scientific Research (NWO).

\bibliography{apssamp}

\section{Appendix A: notes on Theorem 2}

\sect{An alternative way to understand the proof of Theorem \ref{thm:BQP}} We can construct a \textbf{\textsc{BQP}}-complete class of quantum circuits which become classically simulatable if the first qubit is separated.

\sect{Generalising Theorem \ref{thm:BQP}} 
\begin{figure}
    \centering
    \includegraphics[width=8.6cm]{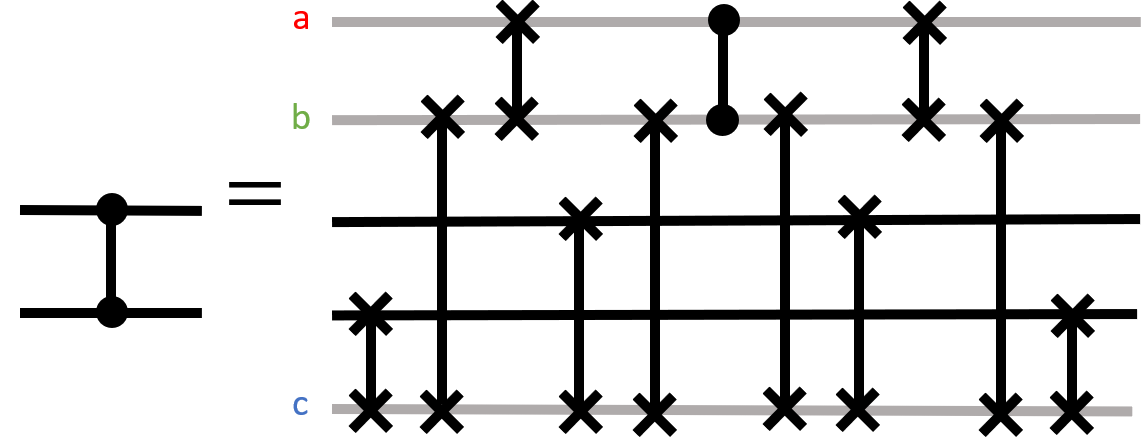}
    \caption{A more advanced gadget that now prevents removal of the first or last qubits.}
    \label{fig:pedants gadget}
\end{figure}
We have not shown a single qubit partition does not exist, we have technically only shown that the first qubit can not be partitioned. This can be remedied by using the slightly more complicated gadget shown in figure \ref{fig:pedants gadget}. This gadget replaces every SWAP gate to ancilla qubit b with a swap gate to ancilla qubit c (at the bottom of the circuit) and then SWAP to ancilla qubit b.
This implements the same operation as a direct SWAP to ancilla b, meaning the removal of the first qubit is still impossible, but now if the last qubit is removed the circuit decomposes to 3 circuits, a circuit consisting of just ancilla c, a circuit consisting of just ancilla qubits a and b, and the rest of the circuit, which is now only SWAPs and single-qubit gates.

\section{Appendix B: Any cutting scheme introduces non-unital channels}
Take some one-clean-qubit unitary, $U$.
Rearrange the circuit form of $U$to interact only twice with the one clean qubit, via two SWAPs:
\begin{figure}[H]
    \centering
    \includegraphics[width=8.6cm]{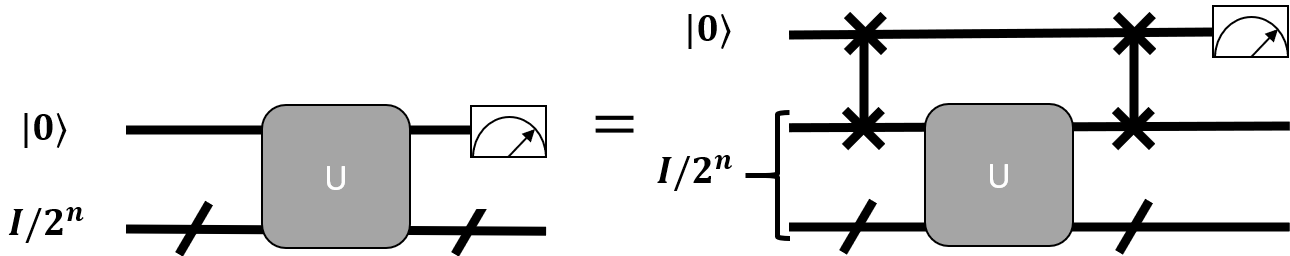}
    \label{fig:DQC1}
\end{figure}
Applying any unital-only-cutting scheme to cut these two qubits results in a sum of channels,
$$
\sum_i^L \alpha_i \mathcal{C}^a_i \otimes \mathcal{C}^b_i 
$$
but by definition, the unital channel applied to the maximally mixed state is the maximally mixed state. Thus the final state is 
$$
(\sum_i^L \alpha_i \mathcal{C}^a_i \ket{0}\ket{0}) \otimes \frac{1}{2^n} \mathbb{I},
$$
which cannot be correct in general, as seen for almost any non-trivial $U$, e.g. n=1, $U=SWAP$.

\end{document}